\newtheorem{theorem}{Theorem}
\newtheorem{proposition}[theorem]{Proposition}
\newcommand{\nc}{\newcommand}
\nc{\SR}{\mathrm{SR}}
\nc{\SE}{\mathrm{SE}}
\nc{\NS}{\mathrm{NS}}
\nc{\pw}{\mathrm{PW}}
\nc{\arbclass}{\mathrm{\Omega}}
\nc{\rnc}{\renewcommand}
\nc{\beq}{\begin{equation}}
\nc{\mc}{\mathcal}
\nc{\eeq}{{\end{equation}}}
\nc{\beqa}{\begin{eqnarray}}
\nc{\eeqa}{\end{eqnarray}}
\nc{\lbar}[1]{\overline{#1}}
\nc{\bra}[1]{\langle#1|}
\nc{\ket}[1]{|#1\rangle}
\nc{\ketbra}[2]{|#1\rangle\!\langle#2|}
\nc{\braket}[2]{\langle#1|#2\rangle}
\nc{\proj}[1]{| #1\rangle\!\langle #1 |}
\nc{\avg}[1]{\langle#1\rangle}
\nc{\Rank}{\operatorname{rank}\,}
\nc{\smfrac}[2]{\mbox{$\frac{#1}{#2}$}}
\nc{\tr}{\operatorname{Tr}}
\nc{\ox}{\otimes}
\nc{\catchset}{T}
\nc{\dg}{\dagger}
\nc{\dn}{\downarrow}
\nc{\cA}{{\cal A}}
\nc{\cB}{{\cal B}}
\nc{\cC}{{\cal C}}
\nc{\cD}{{\cal D}}
\nc{\cE}{{\cal E}}
\nc{\cF}{{\cal F}}
\nc{\cG}{{\cal G}}
\nc{\cH}{{\cal H}}
\nc{\cI}{{\cal I}}
\nc{\cJ}{{\cal J}}
\nc{\cK}{{\cal K}}
\nc{\cL}{{\cal L}}
\nc{\cM}{{\cal M}}
\nc{\cN}{{\cal N}}
\nc{\cO}{{\cal O}}
\nc{\cP}{{\cal P}}
\nc{\cR}{{\cal R}}
\nc{\cS}{{\cal S}}
\nc{\cT}{{\cal T}}
\nc{\cX}{{\cal X}}
\nc{\cY}{{\cal Y}}
\nc{\cZ}{{\cal Z}}
\nc{\csupp}{{\operatorname{csupp}}}
\nc{\qsupp}{{\operatorname{qsupp}}}
\nc{\var}{{\operatorname{var}}}
\nc{\rar}{\rightarrow}
\nc{\lrar}{\longrightarrow}
\nc{\polylog}{{\operatorname{polylog}}}
\nc{\wt}{{\operatorname{wt}}}
\nc{\RR}{{{\mathbb R}}}
\nc{\CC}{{{\mathbb C}}}
\nc{\FF}{{{\mathbb F}}}
\nc{\NN}{{{\mathbb N}}}
\nc{\ZZ}{{{\mathbb Z}}}
\nc{\PP}{{{\mathbb P}}}
\nc{\QQ}{{{\mathbb Q}}}
\nc{\UU}{{{\mathbb U}}}
\nc{\EE}{{{\mathbb E}}}
\nc{\id}{{\operatorname{id}}}
\nc{\CHSH}{{\operatorname{CHSH}}}
\nc{\be}{\begin{equation}}
\nc{\ee}{{\end{equation}}}
\nc{\bea}{\begin{eqnarray}}
\nc{\eea}{\end{eqnarray}}
\nc{\Hom}[2]{\mbox{Hom}(\CC^{#1},\CC^{#2})}
\nc{\rU}{\mbox{U}}
\nc{\ob}[1]{#1}
\def\SE{\mathrm{SE}}
\begin{document}
	\title{Improving zero-error classical communication with entanglement}
	\author{Toby S. Cubitt}
	\affiliation{Department of Mathematics, University of Bristol, Bristol, BS8 1TW, U.K.}

	\author{Debbie Leung}
	\affiliation{Institute for Quantum Computing, University of Waterloo, Waterloo, N2L 3G1, ON, Canada}

	\author{William Matthews}
	\email{will@northala.net}
	\affiliation{Institute for Quantum Computing, University of Waterloo, Waterloo, N2L 3G1, ON, Canada}

	\author{Andreas Winter}
	\affiliation{Department of Mathematics, University of Bristol, Bristol BS8 1TW, U.K.}
	\affiliation{Centre for Quantum Technologies, National University of Singapore,
	 2 Science Drive 3, Singapore 117542}

	\date{19 February 2010}
	\pacs{03.67.Ac, 03.67.Bg, 89.70.Kn}

	\begin{abstract}
		Given one or more uses of a classical channel, only a certain number of messages can be transmitted with zero probability of error. The study of this number and its asymptotic behaviour constitutes the field of classical zero-error information theory \cite{shannon,ZEIT}, the quantum generalisation of which has started to develop recently \cite*{Med,Duan,CCH-zero,superduper}. We show that, given a single use of certain classical channels, entangled states of a system shared by the sender and receiver can be used to increase the number of (classical) messages which can be sent with no chance of error. In particular, we show how to construct such a channel based on any proof of the Kochen-Specker theorem \cite{KS}. This is a new example of the use of quantum effects to improve the performance of a classical task. We investigate the connection between this phenomenon and that of ``pseudo-telepathy'' games. The use of generalised non-signalling correlations to assist in this task is also considered. In this case, a particularly elegant theory results and, remarkably, it is sometimes possible to transmit information with zero-error using a channel with \emph{no} unassisted zero-error capacity.
	\end{abstract}

	\maketitle

	It is well known that if two parties share an entangled quantum state, they may be able to achieve tasks which would be otherwise impossible. For instance, without communicating they can violate Bell inequalities \cite{Bell}, and with classical communication they can teleport the state of a quantum system \cite{Tele}. Here we show that quantum effects can sometimes give an advantage in the context of zero-error coding \cite{shannon,ZEIT}: A classical channel $\mathcal{N}$ connects a sender (Alice) to a receiver (Bob). It has a finite number of inputs and outputs and its behaviour is fully described by the conditional probability distribution over outputs given the input, i.e.\ it is \emph{discrete} and \emph{memoryless}. Given one use of $\mc{N}$, the maximum number of different messages Alice can send to Bob if there is to be no chance of an error is known as the \emph{one-shot zero-error capacity} of $\mc{N}$.

	The main contribution of this paper is to show that for certain classical channels, entanglement between Alice and Bob can be used to increase the one-shot zero-error capacity for classical messages. This is in contrast to interesting recent work considering zero-error coding for classical and quantum data over \emph{quantum} channels \cite*{Med,Duan,CCH-zero,superduper}. Recall that the use of entanglement \cite{QRST} (and even non-signalling correlations \cite{long}) \emph{cannot} increase the transmission rate if we only demand that the error rate goes to zero in the large block length limit: it remains equal to the normal Shannon capacity \cite{shannon-BIG}.
		
	We briefly review classical zero-error coding, then we show how to construct classical channels where entanglement can increase the one-shot zero-error capacity. We then discuss the relationship of entanglement assisted zero-error coding to ``pseudo-telepathy'' games. After that, we upper bound this entanglement assistance by considering generalised non-signalling correlations, giving a simple formula for the non-signalling assisted zero-error capacity of any channel. This turns out to have an interesting relationship to classical results of Shannon from his original paper \cite{shannon} on zero-error capacities.

	Two input symbols of a channel are \emph{confusable} if the corresponding distributions on output symbols overlap. Shannon introduced the \emph{confusability graph} $G(\mc{N})$ of a classical channel $\mc{N}$: Its vertices are the set of input symbols and they are joined if and only if they are confusable. Classically, a zero-error code is a set of non-confusable inputs. The one-shot zero-error capacity $c_0(\mc{N})$ of a channel $\mc{N}$ is simply the maximum size of such a set. In the language of graph theory, a maximum non--confusable set of inputs is a \emph{maximum independent set} of the confusability graph, and when Bob receives a channel output, the possible inputs are a \emph{clique} in the confusability graph. A channel has \emph{no} unassisted zero-error capacity if and only if its confusability graph is \emph{complete} i.e.\ all vertices are connected.

	It is also useful to define the hypergraph of a channel: A hypergraph is just a set $S$ (the vertices) and a set of subsets of $S$ called the \emph{hyperedges}. The hypergraph of a channel $\mc{N}$ has the set of inputs as vertices and one hyperedge for each of the outputs, which contains all the inputs that have a non-zero probability of causing that output; we denote it $H(\mc{N})$. See Figure \ref{fig:graphs} for an example illustrating the confusability graph and channel hypergraph.
	
	\begin{figure}
	  \includegraphics[scale=0.9]{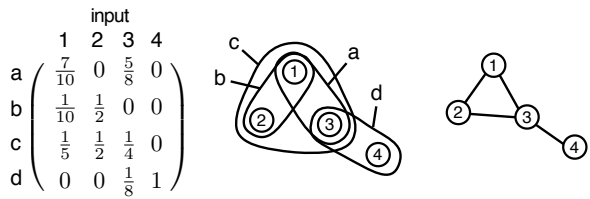}
	  \caption{From left to right: The conditional probability matrix of a classical channel $\mc{N}$ with inputs in $\{1,2,3,4\}$ and outputs in $\{a,b,c,d\}$; Its hypergraph $H(\mc{N})$, with the hyperedges labelled by the corresponding outputs; Its confusability graph $G(\mc{N})$. From $G(\mc{N})$ it is easy to see that inputs 1 and 4 form a maximum non--confusable set (as do 2 and 4) so $c_0(\mc{N}) = 2$.}
	  \label{fig:graphs}
	\end{figure}

In this work we deal with \emph{correlations} (bipartite conditional probability distributions) in the classes $\SR$, $\SE$ and $\NS$: Correlations belong to $\rm{SR}$ if and only if they can be obtained using (classical) Shared Randomness (and local operations); to $\rm{SE}$ (Shared Entanglement) if and only if they can be realised by local operations on a shared quantum state; and to $\rm{NS}$ if and only if the correlation is Non--Signalling (meaning that the marginal distribution on the output of each party is independent of the other party's input). Each class in this list strictly contains the previous one.
	We denote the maximum number of messages which can be sent without error by a single use of $\mathcal{N}$ when any correlation in class $\arbclass$ can be used by $c_{\arbclass}(\mathcal{N})$. The corresponding limiting rate to send zero-error bits is $C_{\arbclass}(\mathcal{N}) := \lim_{n \to \infty} \frac{1}{n} \log c_{\arbclass}(\mathcal{N}^{\ox n})$. A simple convexity argument shows that shared randomness between sender and receiver cannot help, so $c_{\SR}(\mc{N}) = c_0(\mc{N})$ for all channels. In constrast, we will next show how to construct channels $\mc{N}$ for which the number of messages which can be sent perfectly using entanglement, $c_{\SE}(\mc{N})$, is greater than $c_{0}(\mc{N})$.

\medskip
\emph{Entanglement-assisted zero-error communication.}
	Given a classical channel $\mc{N}$ from Alice and Bob, with inputs $X$ and outputs $Y$, how might they make use of entanglement to increase the number of messages which can be sent? Suppose that Alice wants to send one of $q$ messages to Bob without error and that their entangled shared system is in state $\rho_{AB}$. She will perform some operations on her side of the entangled system, and conditioned on the outcomes of any classical measurements that she does, and on the message $m$ that she wants to send, choose some input to $\mc{N}$. All of this can be represented by saying that she chooses one of $q$ generalised measurements according to $m$, each with $|X|$ outcomes, to perform on her side of the state, and then uses the outcome $k$ as input to $\mc{N}$. Since the residual state on Alice's side is irrelevant to Bob's ability to decode the message, the encoding is fully specified by the POVMs $\{ E^{(m)}_1, \ldots, E^{(m)}_k \}$ for $m \in [q] := \{1, \dots, q\}$ corresponding to the $q$ different generalised measurements.

	If Alice sends message $m$, then with probability $p^{(m)}_k$, Alice inputs $k$ and the residual state of Bob's system is $\rho^{(m)}_k = (\tr_A E^{(m)}_k \ox \openone \rho)/p^{(m)}_k$. Letting $\beta^{(m)}_k := p^{(m)}_k \rho^{(m)}_k$, for all messages $m$: $\sum_k \beta^{(m)}_k = \tr_A \rho_{AB} =: \rho_B$ reflecting the fact that without information from the classical channel, Bob has no idea which message Alice sent (i.e.\ causality). Conversely, any set of positive operators $\beta^{(m)}_k$ which satisfy this condition for some $\rho_B$ can be realised by a suitable choice of $\rho_{AB}$ and generalised measurements. Now, including the state of the channel output (we label the system $C$) as well as his half of the entangled system, Bob's state after receiving the channel output $y\in Y$ is $\sigma_m := \sum_{x\in X,y\in Y} \mc{N}(y|x) \proj{y}_{C} \ox \beta^{(m)}_x$. The encoding works if and only if Bob can distinguish perfectly between all the $\sigma_m$, i.e.\ for all $m, m' \in [q]$:
$0 = \tr \sigma_m \sigma_{m'} 
   =\sum_{x,x'\in X\text{ confusable}} \left( \sum_y \mc{N}(y|x)\mc{N}(y|x') \right) 
                                           \tr \beta^{(m)}_x \beta^{(m')}_{x'}$.
We therefore have:
\begin{theorem}\label{SEA}
For any channel $\mathcal{N}$ with inputs $X$ and outputs $Y$, 
$c_{\SE}(\mathcal{N}) = q(G(\mc{N}))$, where $q(G(\mc{N}))$ is the maximum integer $q$ 
such that there exists a density matrix $\rho_B$ and positive semidefinite operators 
$\beta^{(m)}_x$ for all $m \in [q]$, $x\in X$, on some Hilbert space such that for all $m$,
$\sum_{x\in X} \beta^{(m)}_x = \rho_B$, and
\begin{equation*}
	\forall m\neq m'\ \forall \text{ confusable }x,x' \quad \tr \beta^{(m)}_x \beta^{(m')}_{x'} = 0.
\end{equation*}
In particular, $c_{\SE}(\mathcal{N})$ depends only on $G(\mc{N})$.
		\qed
\end{theorem}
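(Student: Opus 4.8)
The plan is to read the theorem as the operational repackaging of the two facts established in the paragraph directly above it, so that almost all the work is already done and what remains is one short positivity-and-convexity observation.

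\emph{Step 1 (dictionary between codes and operator families).} A $q$-message entanglement-assisted zero-error protocol for $\mc{N}$ is a shared state $\rho_{AB}$ together with, for each $m\in[q]$, a POVM $\{E^{(m)}_x\}_{x\in X}$ on Alice's system; it gives rise to the positive semidefinite operators $\beta^{(m)}_x:=\tr_A\bigl[(E^{(m)}_x\ox\openone)\rho_{AB}\bigr]$, which satisfy $\sum_{x}\beta^{(m)}_x=\rho_B:=\tr_A\rho_{AB}$ for every $m$ by POVM normalisation (causality). Bob's post-transmission state when $m$ is sent is $\sigma_m=\sum_{x,y}\mc{N}(y|x)\,\proj{y}_C\ox\beta^{(m)}_x$, a genuine density operator since $\tr\sigma_m=\sum_x\tr\beta^{(m)}_x=\tr\rho_B=1$, and the protocol succeeds iff Bob can perfectly distinguish $\sigma_1,\dots,\sigma_q$, i.e.\ iff the $\sigma_m$ have pairwise orthogonal supports, i.e.\ (for positive operators) iff $\tr\sigma_m\sigma_{m'}=0$ for all $m\neq m'$. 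Conversely --- the one external ingredient --- every family of positive semidefinite $\beta^{(m)}_x$ with $\sum_x\beta^{(m)}_x=\rho_B$ is realised by \emph{some} $\rho_{AB}$ and local POVMs: take $\rho_{AB}$ a purification of $\rho_B$ and let Alice measure (the transpose, in the Schmidt basis of $\rho_B$, of) the POVM $\{\rho_B^{-1/2}\beta^{(m)}_x\rho_B^{-1/2}\}_x$, completed on $\ker\rho_B$; this is the steering (Hughston--Jozsa--Wootters / Schr\"odinger) construction already invoked before the theorem. Hence $c_{\SE}(\mc{N})$ equals the largest $q$ for which $\rho_B$ and positive semidefinite $\beta^{(m)}_x$ exist with $\sum_x\beta^{(m)}_x=\rho_B$ and $\tr\sigma_m\sigma_{m'}=0$ for $m\neq m'$.

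\emph{Step 2 (collapsing the zero-error condition onto $G(\mc{N})$).} Expanding $\sigma_m$ and using $\braket{y}{y'}=\delta_{y,y'}$ gives
\begin{equation*}
\tr\sigma_m\sigma_{m'}=\sum_{x,x'\in X}w(x,x')\,\tr\bigl(\beta^{(m)}_x\beta^{(m')}_{x'}\bigr),\qquad w(x,x'):=\sum_{y\in Y}\mc{N}(y|x)\,\mc{N}(y|x').
\end{equation*}
Now $w(x,x')\ge 0$ always, with $w(x,x')>0$ precisely when the output distributions of $x$ and $x'$ overlap, i.e.\ when $x,x'$ are confusable (in particular whenever $x=x'$); and $\tr(\beta^{(m)}_x\beta^{(m')}_{x'})\ge 0$, being the trace of a product of two positive semidefinite operators. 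Thus $\tr\sigma_m\sigma_{m'}$ is a sum of nonnegative terms, so it vanishes iff every term of positive weight vanishes, i.e.\ iff $\tr\beta^{(m)}_x\beta^{(m')}_{x'}=0$ for every confusable pair $x,x'$. Inserting this into Step~1 shows $c_{\SE}(\mc{N})=q(G(\mc{N}))$. Finally, the conditions defining $q(G(\mc{N}))$ mention $\mc{N}$ only through which pairs of inputs are confusable, i.e.\ only through $G(\mc{N})$, so $c_{\SE}(\mc{N})$ depends on $\mc{N}$ only via $G(\mc{N})$.

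\emph{Main obstacle.} Nothing above is computationally hard; the only step that is not pure bookkeeping is the converse half of the dictionary in Step~1 --- that an abstract feasible operator family is genuinely implementable --- and even that is the steering fact already cited in the text, so the real content of the theorem is the positivity argument of Step~2 that makes the zero-error requirement see only $G(\mc{N})$. For the `maximum' in the definition of $q(G)$ to be attained one should note it is finite; indeed $q(G)\le|X|$ by a one-line rank count (for $m\neq m'$ each $\beta^{(m)}_x\le\rho_B$ while $\tr\beta^{(m)}_x\beta^{(m')}_x=0$, so the $\{\beta^{(m)}_x\}_m$ have orthogonal supports inside $\mathrm{supp}\,\rho_B$, giving $\sum_m\Rank\beta^{(m)}_x\le\Rank\rho_B$, whereas $\sum_x\Rank\beta^{(m)}_x\ge\Rank\rho_B$ for every $m$), and in any case this follows from the later bound $c_{\SE}(\mc{N})\le c_{\NS}(\mc{N})$ and its explicit finite formula.
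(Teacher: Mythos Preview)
Your proposal is correct and follows essentially the same route as the paper: the discussion preceding the theorem already sets up the dictionary $\beta^{(m)}_x=\tr_A\bigl[(E^{(m)}_x\ox\openone)\rho_{AB}\bigr]$ with the HJW-type converse, computes $\tr\sigma_m\sigma_{m'}$ as the weighted sum over confusable pairs, and reads off the characterisation; your Steps~1--2 reproduce exactly that argument, only making explicit the positivity-of-summands reasoning and adding a (correct but optional) finiteness remark.
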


	In light of this fact, it is clear that if a channel has no unassisted zero--error capacity then entanglement cannot change this. Otherwise, entanglement would allow perfect communication over the completely noisy channel, in violation of causality!

	However, there are some channels, for which $c_{\SE} > c_0 > 0$. Examples of such channels can be constructed from proofs of the \emph{Kochen-Specker (KS) theorem} \cite{KS}: We call a family $\{ B_m \}_{m=1}^q$ of complete orthogonal bases $B_m$ of $\mathbb{C}^d$ a \emph{KS basis set} if it is impossible to select one vector from each basis such that no two are orthogonal. That such sets exist is a corollary of the KS theorem \cite{KS}.

\begin{theorem}\label{KSchan}
	For any KS basis set $Z = \{ B_m \}_{m=1}^q$ in $\mathbb{C}^d$ consisting of $q$ orthogonal bases, one can construct a classical channel $\mathcal{N}$ with $c_{0}(\mathcal{N}) < q$ and $c_{\SE}(\mathcal{N}) \geq q$.
\end{theorem}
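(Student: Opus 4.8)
The plan is to encode the combinatorics of the KS basis set into the confusability graph of a channel, and then to read off both bounds: the classical one from an independent‑set argument and the entangled one from Theorem~\ref{SEA}. Let $X$ be the collection of all vectors appearing in the bases $B_1,\dots,B_q$ (for definiteness take the disjoint union of the $B_m$, so $|X|=qd$ and no vector is ``shared''), and let $G$ be the graph on vertex set $X$ in which two distinct vectors are adjacent exactly when they are orthogonal. First I would exhibit a classical channel $\mathcal{N}$ with $G(\mathcal{N})=G$: this is routine, e.g.\ give $\mathcal{N}$ one output $y_e$ for each edge $e=\{u,v\}$ of $G$, produced with positive probability only by the inputs $u$ and $v$, together with one further output $y_v$ for each vertex $v$ produced only by $v$ (so that every input has a genuine output distribution; take it uniform on its possible outputs). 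Then two inputs are confusable under $\mathcal{N}$ iff the supports of their output distributions meet iff they are the endpoints of a common edge, i.e.\ iff the corresponding vectors are orthogonal.

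For the bound $c_0(\mathcal{N})<q$: by definition $c_0(\mathcal{N})$ equals the independence number of $G(\mathcal{N})=G$, i.e.\ the largest number of pairwise non-orthogonal vectors among those in $B_1,\dots,B_q$. Since any two distinct vectors of a single basis are orthogonal, an independent set of $G$ contains at most one vector from each $B_m$, hence has at most $q$ elements; and an independent set of size exactly $q$ would amount to a choice of one vector from each of the $q$ bases such that no two are orthogonal, precisely what the KS property forbids. Hence $c_0(\mathcal{N})\le q-1<q$. (One also has $c_0(\mathcal{N})\ge 2$, since a KS set necessarily contains more than $d$ vectors and so $G$ is not complete; this makes the example genuinely interesting but is not needed for the statement.)

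For the bound $c_{\SE}(\mathcal{N})\ge q$: I would apply Theorem~\ref{SEA} on the Hilbert space $\mathbb{C}^d$ with $\rho_B=\frac1d\openone_d$ and, for $m\in[q]$ and $x\in X$, $\beta^{(m)}_x=\frac1d\proj{x}$ when $x\in B_m$ and $\beta^{(m)}_x=0$ otherwise. These operators are positive semidefinite, and $\sum_{x\in X}\beta^{(m)}_x=\frac1d\sum_{x\in B_m}\proj{x}=\frac1d\openone_d=\rho_B$ because each $B_m$ is an orthonormal basis. Furthermore, for $m\neq m'$ and a confusable pair $x,x'$ one gets $\tr\beta^{(m)}_x\beta^{(m')}_{x'}=\tfrac1{d^2}|\braket{x}{x'}|^2$ if $x\in B_m$ and $x'\in B_{m'}$ (and $0$ otherwise), and this vanishes because confusable means $x\perp x'$. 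Thus the integer $q(G(\mathcal{N}))$ of Theorem~\ref{SEA} is at least $q$, i.e.\ $c_{\SE}(\mathcal{N})\ge q$, which together with the previous paragraph yields the claimed separation.

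The point I expect to require the most care is not any calculation but the choice of graph: the confusability relation must be \emph{orthogonality}, not non-orthogonality. This single choice makes both halves work simultaneously: within-basis orthogonality is automatic, which is exactly what forces an independent set to pick at most one vector per basis and lets the KS property bite, while distinct non-orthogonal vectors (necessarily in different bases) are never confusable, so the natural projector assignment satisfies the constraints of Theorem~\ref{SEA} with nothing left to verify. Everything else — realising a prescribed graph as a confusability graph, and checking that the $\beta^{(m)}_x$ sum to $\rho_B$ and satisfy the trace condition — is mechanical.
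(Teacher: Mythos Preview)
Your proposal is correct and follows essentially the same approach as the paper: build a channel whose confusability relation is orthogonality of the KS vectors, use the clique partition into bases plus the KS property to force $c_0<q$, and use rank-one projectors onto the basis vectors (equivalently, Alice measuring a rank-$d$ maximally entangled state in $B_m$) to get $c_{\SE}\ge q$. The only cosmetic difference is that the paper phrases the achievability step as an explicit protocol (Alice measures in $B_m$, Bob's residual states for a confusable set are orthogonal and hence distinguishable), whereas you feed the corresponding $\beta^{(m)}_x=\tfrac1d\proj{x}$ into Theorem~\ref{SEA}; these are the same argument.
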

\begin{proof}
	Let us write $B_m = \{ \psi_{m1}, \ldots, \psi_{md} \}$. We can construct a channel $\mathcal{N}_Z$ with inputs in $[q]\times[d]$ such that a pair of inputs $(m,j)$, $(m',j')$ are confusable if and only if the corresponding vectors $\psi_{mj}$ and $\psi_{m'j'}$ are orthogonal. (In general there are many ways to do this and any one will do. For instance, one can add an output symbol for each orthogonal pair which can be activated by both inputs in that pair but no others.) $G(\mc{N})$ has an edge between inputs if and only if the corresponding vectors are orthogonal. As such, the vertices of $G$ can be partitioned into $q$ cliques of size $d$, corresponding to the $q$ bases of $Z$, so the independence number of $G$ is certainly no larger than $q$. If there was an independent set of size $q$ in $G$ it would have to have exactly one vertex in each of the $q$ cliques, but this would select one vector in each of the $q$ bases such that no two are orthogonal, contradicting the assumption on $Z$. Therefore, $c_{0}(\mathcal{N}) < q$.

	To send $q$ messages using entanglement, Alice and Bob can use a maximally entangled state of rank $d$: to send $m$, Alice measures her side of the state in the bases $B_m$ and obtains the outcome $j$ (at random). She inputs $(m,j)$ to the channel. Bob's output tells him that Alice's input was in some particular mutually confusable subset, but by construction, these inputs correspond to mutually orthogonal residual states of his subsystem, so he can perform a projective measurement to determine precisely which input Alice made to the classical channel, and hence which of the $q$ messages she chose to send, with certainty.
\end{proof}

In Figure \ref{fig:p24} we give an example of a KS basis set derived from a proof of the KS theorem due to Peres \cite{Peres-KS}.

\begin{figure}
\includegraphics[scale=0.35]{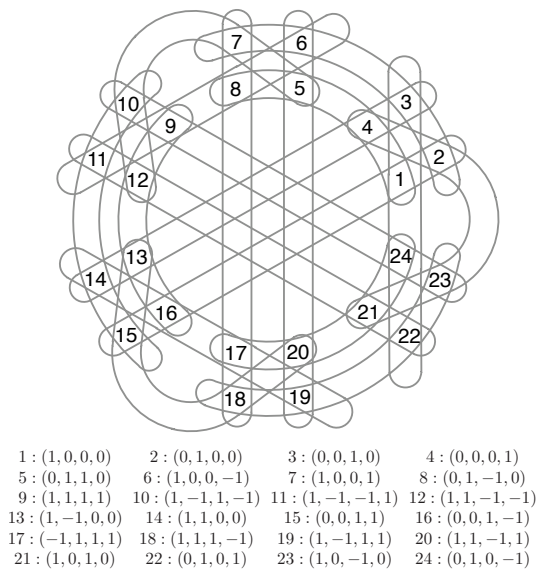}
   \caption{A KS basis set of 6 bases for $\mathbb{C}^4$ is tabulated at the bottom of the figure, one basis per row. The vectors are presented as 4-tuples labelled by a number. The diagram represents a channel $\mc{N}$ with an input symbol for each vector in the set. It has an output symbol for each grey loop: on input $x$ the output is drawn uniformly at random from those corresponding to the 3 loops which contain that $x$. Inputs are confusable if and only if corresponding vectors are orthogonal, so by Theorem 2, $c_0(\mc{N}) < 6$ (in fact it is 5), but $c_{\SE}(\mc{N}) \geq 6$. It is interesting to note that to send one of $6$ symbols (with equal prior probabilities) by a single use of $\mc{N}$, the best \emph{unassisted} code has error probability $\frac{1}{18}$. }
   \label{fig:p24}
\end{figure}

\emph{Relationship to pseudo-telepathy games.} 
This increase of the one-shot zero error capacity is an example of 
performing a classical task without error using entanglement,
that becomes impossible without the entanglement.
This phenomenon might sound familiar to those who have encountered `pseudo-telepathy'
games (hereafter \emph{PT-games}) \cite{PT}. The difference is that in these 
games Alice and Bob are not allowed to communicate with each other at all, 
but instead communicate with a verifier who sends them questions and then decides 
whether or not they win the game based on their replies.

	To be precise, in this context a `game' $\mathfrak{g}$ consists of 
questions $a$ and $b$ (drawn according to a fixed distribution $p(a,b)$) 
to Alice and Bob respectively, who reply with answers $\alpha$ and $\beta$.
These are accepted with probability $A(a,b,\alpha,\beta)$, 
$A$ also being a fixed distribution. 
The probability of acceptance (a.k.a.\ `winning') is given by
\[
	\mathfrak{g}(s) := \sum_{a,b,\alpha,\beta} A(a,b,\alpha,\beta) p(a,b) s(\alpha,\beta|a,b),
\]
where the $\emph{strategy}$ $s(r|q)$ is a correlation describing the responses $r$
of the provers to questions $q$. Note that
$\mathfrak{g}(s)$ is a linear function of $s$. 
We call the strategy $s$ `perfect' (for the game $\mathfrak{g}$) if and only if $\mathfrak{g}(s) = 1$. 
Typically we are interested in the best winning probability which can be 
achieved if the strategy is restricted to some class of correlations 
like $\NS$ or $\SE$. A \emph{PT-game} is a game $\mathfrak{g}$ which can be won 
with certainty by a strategy in $\SE$ but cannot be won with certainty by any 
strategy in $\SR$.

\begin{proposition}
	For any channel $\mc{N}$ with inputs $X$ and outputs $Y$, and integer $n$, 
	there exists a natural game $\mathfrak{g}$ such that $\mathfrak{g}$ has a 
	perfect strategy in the class of correlations $\arbclass$ if and only if 
	$c_{\arbclass} \geq n$.
\end{proposition}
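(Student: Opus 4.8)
The plan is to realise the pair $(\mc{N},n)$ as a two-prover game in which the referee simulates the channel. Define the game $\mathfrak{g}=\mathfrak{g}(\mc{N},n)$ thus: the referee draws $(m,y)$ uniformly from $[n]\times Y$, sends the message label $m$ to Alice and the channel output $y$ to Bob; Alice answers with a symbol $x\in X$ and Bob with a label $m'\in[n]$; the referee accepts unless $\mc{N}(y|x)>0$ and $m'\ne m$. Then for a strategy $s(x,m'|m,y)$,
\begin{equation*}
	\mathfrak{g}(s)=\frac{1}{n|Y|}\sum_{m,y,x,m'} s(x,m'|m,y)\,\bigl[\,\mc{N}(y|x)=0\ \text{or}\ m'=m\,\bigr],
\end{equation*}
so $s$ is perfect if and only if $s(x,m'|m,y)>0$ together with $\mc{N}(y|x)>0$ force $m'=m$. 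This is the ``natural game'' I would attach to $(\mc{N},n)$. Two standing facts will be used: every correlation in $\SR$, $\SE$ or $\NS$ is non-signalling (so the marginal $s(x|m)$ is well defined, independent of Bob's question), and each of these classes is closed under local classical pre- and post-processing on each side, including the use of private randomness.

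First I would prove the ``if'' direction: from a perfect $s\in\arbclass$ produce a zero-error code for $n$ messages. To transmit $m$, Alice inputs $m$ to her half of the $\arbclass$-correlation, obtains an outcome $x$ (distributed as the well-defined marginal $s(x|m)$), and sends $x$ through $\mc{N}$; Bob receives some output $y$, inputs $y$ to his half, obtains $m'$, and outputs it as the decoded message. The key observation is that the joint law of $(x,m')$ given $(m,y)$ generated by this procedure is precisely $s(x,m'|m,y)$, and that whenever Bob actually receives $y$ we have $\mc{N}(y|x)>0$ for the $x$ Alice sent; perfectness of $s$ then yields $m'=m$ with certainty. Hence the code is zero-error and uses a correlation in $\arbclass$, so $c_{\arbclass}(\mc{N})\ge n$.

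Next, the converse: from any $\arbclass$-assisted zero-error code for $n$ messages, extract a perfect strategy. Folding all local private randomness and classical post-processing into the correlation --- legitimate because $\arbclass\in\{\SR,\SE,\NS\}$ is closed under local operations --- one may assume Alice feeds the message $m$ into her half and uses the resulting outcome $x\in X$ directly as the channel input, and Bob feeds the channel output $y$ into his half and uses the resulting outcome $m'\in[n]$ as his guess. This gives a correlation $s(x,m'|m,y)\in\arbclass$, and the zero-error guarantee states exactly that for every $m$, every $x$ with $s(x|m)>0$, every $y$ with $\mc{N}(y|x)>0$ and every $m'$ with $s(x,m'|m,y)>0$ one has $m'=m$ --- that is, $s$ is a perfect strategy for $\mathfrak{g}$.

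The step I expect to be the real obstacle is the consistency claim in the ``if'' direction. The protocol there contains a causal loop: Alice's correlation outcome determines, through the channel, Bob's correlation input. One must verify that the outcomes are nonetheless jointly distributed according to $s$, and this works only because of non-signalling from Bob to Alice, which lets Alice's side ``go first'': Alice samples $x$ from $s(x|m)$, the channel produces $y$ from $\mc{N}(\cdot|x)$, Bob samples $m'$ from the conditional $s(m'|m,y,x)$, and the product of these equals $s(x,m'|m,y)$ exactly because $s(x|m,y)=s(x|m)$. The remaining points --- matching ``winning probability exactly $1$'' to the zero-error condition (which needs the question distribution to have full support on $[n]\times Y$) and checking closure of the correlation classes under the local processings used --- are routine.
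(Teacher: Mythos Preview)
Your proposal is correct and follows exactly the same approach as the paper: the game you define is identical to the paper's (independent uniform questions $m\in[n]$ to Alice and $y\in Y$ to Bob, answers $x$ and $m'$, accept unless $\mc{N}(y|x)>0$ and $m'\ne m$). The paper's own proof is considerably terser---it states the game, records the perfectness condition $\sum_{x,y}\mc{N}(y|x)\,s(x,\widehat{m}\,|\,m,y)=\delta_{m\widehat{m}}$, and asserts the equivalence---whereas you spell out both directions and the non-signalling argument needed to justify wiring Alice's output into Bob's input in the ``if'' direction.
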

\begin{proof}
	In the game $\mathfrak{g}$, the verifier sends Alice $m \in [n]$ and 
	Bob $y \in Y$ drawn independently and uniformly at random. Alice sends back 
	an answer $x \in X$ and Bob replies with $\widehat{m} \in [n]$. 
	If $\mc{N}(y|x) > 0$ then they win the game if and only if $m = \widehat{m}$. 
	Otherwise, they always win the game. A strategy $s$ is perfect for this game 
	if and only if $\sum_{x,y} \mc{N}(y|x) s(x,\widehat{m}|m,y) = \delta_{m\widehat{m}}$.
	Therefore, there is a perfect strategy for $\mathfrak{g}$ in 
	$\arbclass$ if and only if $c_{\arbclass}(\mc{N}) \geq m$.
\end{proof}

	This means that, in order to give an advantage for zero-error coding over $\SR$, 
a correlation in $\SE$ must also be able to win a particular PT-game with certainty 
(and hence sit on the boundary of the non-signalling polytope).

\medskip
\emph{Non-signalling assisted zero-error capacity and exact simulation.} 
While all correlations which can be realised by measurements on entangled states are non-signalling, the converse is not true, as in the case of the Popescu-Rohrlich box \cite{PR}.
	Consequently, we can study non-signalling assisted protocols to find upper bounds for entanglement assistance, but this study also leads to a beautifully simple theory of non--signalling assisted zero--error communication.

Recalling the definition of a hypergraph,
the \emph{fractional-packing number} $\alpha^*(H)$ of a hypergraph $H$ \cite{fpn} 
on vertices $X$ is the maximum value of $\sum_{x\in X} v(x)$ where 
$v: X \to [0,1]$ weights the vertices subject to the constraint that 
for all hyperedges $S$ of $H$, $\sum_{x\in S}v(x) \leq 1$.

\begin{theorem}
\label{NSZELP}
For a classical channel $\mathcal{N}$ with hypergraph $H(\mathcal{N})$,
  \begin{equation*}
    c_{\NS}(\mathcal{N})
    = \left\lfloor \alpha^*(H(\mathcal{N})) \right\rfloor,
  \end{equation*}
where $\alpha^*(H(\mathcal{N}))$ is the fractional-packing number of $H(\mathcal{N})$. 

Furthermore, since the function $\alpha^*$ is multiplicative, in the sense that $\alpha^*(H(\mc{N}_1\ox\mc{N}_2)) = \alpha^*(H(\mc{N}_1))\alpha^*(H(\mc{N}_2))$, the NS-assisted zero-error capacity of $\mc{N}$ is
  \begin{equation*}
    C_{\NS}(\mathcal{N}) = \log \alpha^*(H(\mathcal{N})),
  \end{equation*}
  which is additive: $C_{\NS}(\mathcal{N}_1\ox\mathcal{N}_2) = C_{\NS}(\mathcal{N}_1) + C_{\NS}(\mathcal{N}_2)$.
\end{theorem}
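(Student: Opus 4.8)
The plan is to establish the equality $c_{\NS}(\mc{N})=\lfloor\alpha^*(H(\mc{N}))\rfloor$ by proving the two inequalities separately, and then to read off the asymptotic rate from multiplicativity of $\alpha^*$ under the tensor product of hypergraphs. The first step is to put any NS-assisted one-shot protocol into a canonical form. Absorbing all local pre- and post-processing into the shared resource, a protocol sending $q$ messages is specified by a non-signalling conditional distribution $P(x,\widehat{m}\mid m,y)$: Alice feeds her message $m$ into the box, plays the output $x\in X$ into $\mc{N}$, and Bob feeds the channel output $y\in Y$ into the box and announces $\widehat{m}$. Non-signalling says that Alice's marginal $a_m(x):=P(x\mid m)$ does not depend on $y$ and Bob's marginal $b_y(\widehat{m}):=P(\widehat{m}\mid y)$ does not depend on $m$. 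Since the only way an error can occur is ``$\mc{N}(y\mid x)>0$ and $\widehat{m}\neq m$'', the protocol is zero-error if and only if $P(x,\widehat{m}\mid m,y)=0$ whenever $\mc{N}(y\mid x)>0$ and $\widehat{m}\neq m$.

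For the upper bound, take such a zero-error protocol for $q$ messages and fix an output $y$ with hyperedge $S_y=\{x:\mc{N}(y\mid x)>0\}$. For $x\in S_y$ the zero-error condition collapses the sum $\sum_{\widehat{m}}P(x,\widehat{m}\mid m,y)=a_m(x)$ to its single term $\widehat{m}=m$, so $a_m(x)=P(x,\widehat{m}=m\mid m,y)$; summing over $x\in S_y$ and comparing with the full sum over $x$, which by non-signalling on Bob's side equals $b_y(m)$, gives $\sum_{x\in S_y}a_m(x)\le b_y(m)$. Summing over $m\in[q]$ and using $\sum_{m=1}^q b_y(m)\le 1$ yields $\sum_{x\in S_y}v(x)\le 1$ for the weighting $v(x):=\sum_{m=1}^q a_m(x)$. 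Every input lies in some hyperedge (each row of $\mc{N}$ has a positive entry), so the bound $v(x)\le 1$ is automatic and $v$ is a feasible fractional packing of $H(\mc{N})$ with total weight $\sum_x v(x)=q$; hence $q\le\alpha^*(H(\mc{N}))$, and since $q$ is an integer, $q\le\lfloor\alpha^*(H(\mc{N}))\rfloor$.

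For the matching lower bound, put $q=\lfloor\alpha^*(H(\mc{N}))\rfloor$ ($q=1$ is trivial, so assume $q\ge2$) and fix an optimal fractional packing $v$, which exists because the feasible region is a compact polytope. Normalise $u(x):=v(x)/\alpha^*(H(\mc{N}))$, so $u$ is a probability distribution and $s_y:=\sum_{x\in S_y}u(x)\le 1/\alpha^*(H(\mc{N}))\le 1/q$ for every output $y$. I would then exhibit the explicit box
\[
  P(x,\widehat{m}\mid m,y)=
  \begin{cases}
    u(x)\,\delta_{m\widehat{m}}, & \text{if } \mc{N}(y\mid x)>0,\\
    u(x)\,\dfrac{1/q-s_y\,\delta_{m\widehat{m}}}{1-s_y}, & \text{if } \mc{N}(y\mid x)=0,
  \end{cases}
\]
and verify directly that it is non-negative (using $s_y\le 1/q<1$), normalised, and non-signalling --- Alice's marginal works out to $u(x)$ and Bob's to the uniform distribution $1/q$ on $[q]$, each independent of the other party's input --- and zero-error by construction. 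This gives $c_{\NS}(\mc{N})\ge q$ and completes the one-shot formula.

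For the asymptotic statement, observe that $H(\mc{N}_1\ox\mc{N}_2)$ has vertex set $X_1\times X_2$ and its hyperedges are exactly the products $S\times S'$ of hyperedges of the factors, so the product of optimal packings shows $\alpha^*(H(\mc{N}_1))\alpha^*(H(\mc{N}_2))\le\alpha^*(H(\mc{N}_1\ox\mc{N}_2))$. For the reverse inequality, since every vertex of a channel hypergraph lies in a hyperedge the bounds $v(x)\le1$ are redundant, so by LP duality $\alpha^*(H)$ equals the fractional edge-covering number $\rho^*(H)$; the product of optimal fractional covers is a fractional cover of $H(\mc{N}_1\ox\mc{N}_2)$, giving $\rho^*(H(\mc{N}_1\ox\mc{N}_2))\le\rho^*(H(\mc{N}_1))\rho^*(H(\mc{N}_2))$ and hence multiplicativity of $\alpha^*$. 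Applying the one-shot formula to $\mc{N}^{\ox n}$ gives $c_{\NS}(\mc{N}^{\ox n})=\lfloor\alpha^*(H(\mc{N}))^n\rfloor$, and since $\alpha^*(H(\mc{N}))\ge1$ always, the floor is asymptotically negligible, so $C_{\NS}(\mc{N})=\log\alpha^*(H(\mc{N}))$, from which additivity is immediate. The step I expect to need the most care is the lower-bound construction: the branch $\mc{N}(y\mid x)>0$ is already pinned down by the zero-error requirement, so one must choose the conditional distribution on the branch $\mc{N}(y\mid x)=0$ so that both marginals come out independent of the other party's input --- the slack $1/q\ge s_y$ supplied by the packing constraint, together with making Bob's output uniform, is exactly what makes a valid non-signalling box possible.
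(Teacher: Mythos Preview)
The paper itself does not give a proof of this theorem; it simply states ``The proof of Theorem~\ref{NSZELP} is given in~\cite{long}'' and moves on. So there is no in-paper argument to compare against.

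That said, your proposal is a complete and correct proof. The reduction of an arbitrary NS-assisted protocol to a single non-signalling box $P(x,\widehat m\mid m,y)$ is legitimate because pre- and post-processing of a non-signalling correlation is again non-signalling. Your upper bound is clean: from the zero-error condition you correctly extract $\sum_{x\in S_y}a_m(x)\le b_y(m)$, and summing over $m$ produces a feasible fractional packing of weight~$q$. In the lower bound, the explicit box you write down does exactly what is needed; the verifications of normalisation, both non-signalling marginals, non-negativity (via $s_y\le 1/q$), and the zero-error property all go through as you indicate, and the edge case $S_y=X$ cannot occur when $q\ge 2$ since it would force $\alpha^*\le 1$. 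Your multiplicativity argument via LP duality (packing $=$ covering once the box constraints $v(x)\le 1$ are made redundant by the hyperedge constraints) is the standard and correct route, and the passage to the asymptotic rate is straightforward.

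This LP characterisation plus an explicit achieving box plus LP-duality for multiplicativity is precisely the line of argument one expects in the companion paper~\cite{long}, so while a direct comparison is impossible here, your approach is almost certainly the intended one.
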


To get the best upper bounds on entanglement assisted zero-error communication using this result, we should minimise over all hypergraphs with the same confusability graph $G$ as the channel in question, because $c_{\SE}$ depends only on $G$ (see Theorem \ref{SEA}).

The proof of Theorem \ref{NSZELP} is given in \cite{long}. With one interesting proviso, the non-signalling assisted zero-error capacity $C_{\NS}(\mathcal{N})$ is the same as the feedback-assisted zero-error capacity of the channel $C_{\mathrm{0F}}(\mathcal{N})$, as derived by Shannon in his seminal paper \cite{shannon}. The proviso applies only when the unassisted zero-error capacity is zero: 
Then $C_{\NS}$ can be positive, whereas $C_{\mathrm{0F}}$ is always zero. We will now give a simple example of this. Let $\mathcal{N}$ be the classical channel which takes as input $j$ an element of the set $A = \{1,2,3,4\}$, and outputs a $2$-element subset of $A$ which contains $j$. Since any two inputs of this channel can be confused (i.e.\ can lead to the same output), it has no unassisted zero-error capacity.

We now exhibit a bipartite correlation $P(x,y|a,b)$ that can be used to boost the zero error capacity of $\mathcal{N}$ to one bit: Alice's input $a$ is a bit and Bob's input $b$ is a $2$-element subset of $A$. Alice's output $x$ is an element of $A$, drawn uniformly at random (independently of either input); if $x \in b$ then Bob's output $y$ is set to $a$, otherwise it is set to $\rm{NOT}(a)$. Clearly, the marginal distribution of Bob's output is independent of Alice's input and vice versa, so $P$ is non-signalling.

Now, suppose Alice plugs her output of $P$ into the channel $\mathcal{N}$ and Bob uses the output of $\mathcal{N}$ as his input $b$ to $P$. Given the behaviour of $\mathcal{N}$ this forces $b$ to contain $x$, therefore Bob's output $y$ will always be equal to $a$. A bit is transmitted from Alice to Bob with perfect reliability!


\medskip
\emph{Channel simulation and reversibility.}
One can also consider the `reverse' problem to zero-error coding 
\cite{long}, and ask what is the minimum identity channel needed, given correlations in $\arbclass$, to simulate one (or more) uses of some noisy channel $\mc{N}$ exactly (in the sense of exactly reproducing the conditional probability distribution of outputs given inputs).
We denote this minimum required number of messages by $k_{\arbclass}(\mc{N})$, 
and the \emph{$\arbclass$-assisted simulation cost} of $\mc{N}$ by 
$K_{\arbclass}(\mc{N}) := \lim_{n \to \infty} \frac{1}{n} \log k_{\mathrm{NS}}(\mc{N}^{\ox n})$. 
Again, the structure of the set of all 
non-signalling correlations results in a very simple formula for $k_{\rm{NS}}(\mc{N})$:
For any channel $\mc{N}$ with inputs $X$ and outputs $Y$,
$
k_{\rm{NS}}(\mc{N}) = \left\lceil \sum_{y} \max_{x} \mc{N}(y|x) \right\rceil,
$
and since the sum here is multiplicative under tensor products of the channel matrix,
$
K_{\rm{NS}}(\mc{N}) = \log \left( \sum_{y} \max_{x} \mc{N}(y|x) \right).
$

While we have found examples showing an arbitrarily large gap between 
$k_{\rm{NS}}(\mc{N})$ and $k_{\rm{SR}}(\mc{N})$, the gap disappears in 
the limit of many channel uses: 
$K_{\rm{SR}}(\mc{N}) = K_{\rm{SE}}(\mc{N}) = K_{\rm{NS}}(\mc{N})$ \cite{long}.

Curiously, a kind of combinatorial zero-error reversibility exists when non-signalling correlations are freely available: For a given channel hypergraph $H$, the NS-assisted zero-error capacity of channels with hypergraph $H$ is equal to the infimum of the NS-assisted simulation cost for channels with hypergraph $H$ \cite{long}, in analogy to the direct and reverse Shannon theorems \cite{shannon-BIG,QRST}.

\medskip
\emph{Conclusion.} 
We have shown that entanglement can sometimes be used to increase the number of classical messages which can be sent perfectly over classical channels. To upper bound this quantum advantage, we have given a simple formula for the non-signalling assisted capacity as a linear program. These discoveries present many new questions: Firstly, can entanglement improve the \emph{asymptotic} zero-error capacity, compared to no assistance, as we have seen NS correlations can? More generally, can we find a simple expression for the entanglement assisted zero-error capacity in the one shot or asymptotic case? 
Note that while the best general upper bound known on $C_0$ is given by Lov\'{a}sz'
famous $\vartheta$ function~\cite{Lovasz}, it was very recently found (indeed
prompted by our Theorem~\ref{KSchan}) that
$\vartheta$ is still an upper bound on $C_{\rm{SE}}$~\cite{Beigi,DSW}.
Can we find simpler, less contrived, examples of channels where $c_{\SE} > c_0$? 
In another direction, the relationship between BKS theorems and PT-games has 
been studied in \cite{RW}. 
We found connections between the entanglement assisted zero-error phenomenon 
and both of these topics, but left open the development of a fuller understanding 
of the relationships among the three.

\acknowledgments
We would like to thank 
Nicolas 
Brunner, 
Runyao 
Duan, 
Tsuyoshi 
Ito, 
Ashley 
Montanaro,
Marcin 
Paw\l{}owski, 
Paul 
Skrzypczyk and 
Stephanie 
Wehner for discussions.
TSC is supported by a Leverhulme early-career fellowship and the EC 
project ``QAP'' (contract no.~IST-2005-15848).
DL was funded by CRC, CFI, ORF, CIFAR, NSERC, and QuantumWorks.
WM acknowledges the support of NSERC and QuantumWorks.
AW is supported by the EC, the U.K. EPSRC, the 
Royal Society, and a Philip Leverhulme Prize. 
The CQT is funded by the Singapore MoE and the NRF
as part of the Research Centres of Excellence programme.
We are grateful for the hospitality of the Kavli Institute for Theoretical Physics 
at UCSB, where a large part of this research was performed. 
This research was supported in part by the NSF under Grant No.~PHY05-51164.

\end{document}